\theoremstyle{plain}
\newtheorem{theorem}{Theorem}
\newtheorem{lemma}[theorem]{Lemma}
\theoremstyle{definition}
\newtheorem{remark}[theorem]{Remark}
\theoremstyle{remark}
\def\bq{\begin{eqnarray}}
\def\eq{\end{eqnarray}}
\def\bqq{\begin{eqnarray*}}
\def\eqq{\end{eqnarray*}}
\def\nn{\nonumber}
\def\eps{\varepsilon}
\def\wto{\rightharpoonup}
\def \ess {\rm ess}
\def \supp {\rm supp}
\def\R{\mathbb{R}}
\def\cE {\mathcal{E}}
\def \d {{\rm d}}
\title{Blow-up for biharmonic Schr\"odinger equation with critical nonlinearity}
\author{Thanh Viet Phan \\
\normalsize{Applied Analysis Research Group, Faculty of Mathematics and Statistics,} \\
\normalsize{Ton Duc Thang University, Ho Chi Minh City, Vietnam}\\
\normalsize{phanthanhviet@tdt.edu.vn} 
}
\date{\normalsize\today}
\begin{document}

\maketitle


\begin{abstract} We consider the minimizers for the biharmonic nonlinear Schr\"odinger functional
$$
\mathcal{E}_a(u)=\int_{\mathbb{R}^d} |\Delta u(x)|^2 \d x + \int_{\mathbb{R}^d} V(x) |u(x)|^2 \d x  -  a \int_{\mathbb{R}^d} |u(x)|^{q} \d x
$$
with the mass constraint $\int |u|^2=1$. We focus on the special power $q=2(1+4/d)$, which makes the nonlinear term $\int |u|^q$ scales similarly to the biharmonic term $\int |\Delta u|^2$. Our main results are the existence and blow-up behavior of the minimizers when $a$ tends to a critical value $a^*$, which is the optimal constant in a Gagliardo--Nirenberg interpolation inequality.

\bigskip
   
\noindent {\bf Keywords:} Biharmonic equation, critical nonlinearity, Gagliardo--Nirenberg inequality, blow-up profile 
\end{abstract}


\section{Introduction}

In this paper, we consider the existence and behavior of the minimizer for the biharmonic nonlinear Schr\"odinger functional
$$
\cE_a(u)=\int_{\R^d} |\Delta u(x)|^2 \d x + \int_{\R^d} V(x) |u(x)|^2 \d x  -  a \int_{\R^d} |u(x)|^{q} \d x
$$
under the mass constraint
$$
\int_{\R^d} |u(x)|^2 \d x = 1.
$$
The biharmonic operator $\Delta^2$ describes the effects of higher-order
dispersion in nonlinear physics, see e.g. \cite{Karpman-96,BKS-00,FIP-02,Pausader-09} for a detailed discussion of the motivation. The function $V$ stands for an external potential and the parameter $a>0$ stands for the strength of the attraction of the system. 

We are interested in the special power
$$
q=2\Big( 1+ \frac{4}{d}\Big),
$$
which makes the nonlinear term $\int |u|^{q}$ scales in the same way as the biharmonic term $\int  |\Delta u|^2 $. Indeed, by defining $u_\ell(x)=\ell^{d/2} u(\ell x)$, we have $\|u_\ell\|_{L^2}=\|u\|_{L^2}$ and
\bq \label{eq:scale}
\cE_a(u_\ell)= \ell^4 \Big( \int_{\R^d} |\Delta u(x)|^2 \d x  - a \int_{\R^2} |u(x)|^{q} \d x \Big) + \int_{\R^d} V(x/\ell) |u(x)|^2 \d x, \quad \forall \ell>0.
\eq

In case $V=0$, it follows from \eqref{eq:scale} that the functional $\cE_a(u)$ is bounded from below if and only if $a\le a^*$, where $a^*$ is the optimal constant in the Gagliardo--Nirenberg interpolation inequality \cite{Nirenberg-59}
\bq \label{eq:inte-ineq}
\Big( \int_{\R^d} |\Delta u(x)|^2 \d x \Big) \Big( \int_{\R^d} |u(x)|^2 \d x \Big)^{\frac{q-2}{2}} \ge a^* \int_{\R^d} |u(x)|^{q} \d x, \quad \forall u\in H^2(\R^d).
\eq
It has been known that the inequality \eqref{eq:inte-ineq} has a minimizer which can be chosen to be radially symmetric; see \cite[Appendix A]{BL-16}. Moreover, any minimizer $u$, up to a dilation $u\mapsto \mu u(\lambda )$ if necessary, satisfies the Euler-Lagrange equation
\bq  \label{eq:inte-ineq-eq}
(\Delta)^2 u + u  - u^{q-1} =0
\eq
for some constant $\mu \in \R$ (Lagrange multiplier). The uniqueness for solutions to \eqref{eq:inte-ineq-eq} (up to translations) and the uniqueness for minimizers of  \eqref{eq:inte-ineq} both remain open. 

In the present paper, we are interested in the existence and behavior of minimizers for the functional $\cE_a(u)$ when $a\uparrow a^*$, with the presence of the external potential $V$.  Our main results are  

\begin{theorem}[Existence and non-existence]\label{thm:1} Let $V\in L^1_{\rm loc}(\R^d,\R)$ satisfy either
\begin{itemize}
\item [\rm (V1)] $V\ge 0$ and $V(x)\to \infty$ as $|x|\to \infty$; or

\item [\rm (V2)] ${\ess \inf}\, V<0$ and 
$$\min\{V,0\}\in L^{p_1}(\R^d)+L^{p_2}(\R^d), \quad \max\{1,d/4\}<p_1<p_2<\infty.$$
\end{itemize}
Then there exists a constant $a_*\in (0,a^*)$ such that for all $a_*<a<a^*$ the variational problem
\bq \label{eq:Ea}
 E_a=\inf \left\{ \cE_a(u)\,|\, u\in H^2(\R^d), V|u|^2 \in L^1(\R^d), \int |u|^2 =1 \right\}
 \eq
has (at least) a minimizer. On the other hand, if $a \ge a^*$, then $E_a$ has no minimizer. 
\end{theorem}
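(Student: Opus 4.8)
The plan is to handle non-existence and existence separately, splitting the existence argument according to (V1) or (V2).

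For the non-existence part I would exploit the scaling identity \eqref{eq:scale} together with the rigidity of \eqref{eq:inte-ineq}. Let $Q$ be a minimizer of \eqref{eq:inte-ineq} with $\int|Q|^2=1$ and put $Q_\lambda(x)=\lambda^{d/2}Q(\lambda x)$. When $a>a^*$, feeding $Q_\lambda$ (recentered at a Lebesgue point of $V$) into $\cE_a$ makes the biharmonic-minus-nonlinear part equal to $\lambda^4(a^*-a)\int|Q|^q\to-\infty$, while the potential part stays bounded as $\lambda\to\infty$; hence $E_a=-\infty$ and no minimizer exists. When $a=a^*$ the bracket in \eqref{eq:scale} is nonnegative and vanishes precisely on minimizers of \eqref{eq:inte-ineq}, so $\cE_{a^*}(u)\ge\int V|u|^2\ge({\ess\inf}\,V)\int|u|^2$; concentrating $Q_\lambda$ near a point where $V$ is close to its essential infimum shows $E_{a^*}={\ess\inf}\,V$. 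A minimizer would then have to be a minimizer of \eqref{eq:inte-ineq} whose mass is supported in $\{V={\ess\inf}\,V\}$, but such a function solves the fourth-order equation \eqref{eq:inte-ineq-eq} and cannot vanish on a set of positive measure, contradicting the fact that $\{V>{\ess\inf}\,V\}$ has positive measure under either (V1) or (V2).

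For existence I would use the direct method, beginning with coercivity. For $\int|u|^2=1$, \eqref{eq:inte-ineq} gives $\int|\Delta u|^2-a\int|u|^q\ge(1-a/a^*)\int|\Delta u|^2$. Under (V1) this yields $\cE_a(u)\ge(1-a/a^*)\int|\Delta u|^2+\int V|u|^2\ge0$, so minimizing sequences are bounded in $H^2$ with $\int V|u_n|^2$ controlled. Under (V2) I would absorb the negative part by the relative bound $\int V_-|u|^2\le\eps\int|\Delta u|^2+C_\eps\int|u|^2$, which holds with relative bound zero because $V_-\in L^{p_1}+L^{p_2}$ with $p_i>\max\{1,d/4\}$ makes the relevant conjugate exponents subcritical for $H^2(\R^d)\hookrightarrow L^r(\R^d)$; choosing $\eps<1-a/a^*$ again gives a lower bound and $H^2$-boundedness. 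In either case I pass to a minimizing sequence $u_n\wto u_0$ weakly in $H^2$ and strongly in $L^r_{\rm loc}$.

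The compactness step is the main obstacle and is where the two cases diverge. Under (V1) the confinement $V\to\infty$ gives tightness directly, since $\int_{|x|>R}|u_n|^2\le(\inf_{|x|>R}V)^{-1}\int V|u_n|^2\to0$ uniformly in $n$; with local Rellich compactness this yields $u_n\to u_0$ strongly in $L^2$ and $L^q$, and weak lower semicontinuity of the biharmonic and potential energies makes $u_0$ a minimizer. Under (V2) there is no confinement, so I would first prove the strict binding inequality $E_a<0$. Inserting $Q_\lambda$ centered at a Lebesgue point $x_0$ of $V$ with $V(x_0)<0$, the scaling identity gives $\cE_{a^*}(Q_\lambda)\to V(x_0)<0$ as $\lambda\to\infty$, and by continuity in $a$ the same trial function keeps $\cE_a$ negative for $a$ close to $a^*$; the infimum of the $a$ for which $E_a<0$ defines the constant $a_*$, which is precisely why existence is asserted only for $a_*<a<a^*$. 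Since \eqref{eq:inte-ineq} and scaling give $E_a^\infty=0$ for the problem at infinity (with $V$ replaced by $0$), the strict inequality $E_a<0=E_a^\infty$ lets me run concentration--compactness on $u_n$: vanishing is excluded because it forces $\int|u_n|^q\to0$ and $\int V_-|u_n|^2\to0$, whence $\liminf\cE_a(u_n)\ge0$; escape of mass to infinity is excluded because the potential energy vanishes there so the energy cannot fall below $E_a^\infty=0$; and dichotomy is excluded by the strict subadditivity implied by $E_a<0$. The surviving compact case yields a minimizer. I expect the quantitative relative bound on $V_-$ and the exclusion of dichotomy to be the delicate points, both resting on $a<a^*$ and on the decay of $V$ at infinity encoded in (V2).
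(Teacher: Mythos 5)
Your overall architecture (non-existence via scaling, coercivity via Gagliardo--Nirenberg plus a relative form-bound on $V_-$, confinement under (V1)) matches the paper, but your compactness argument under (V2) is genuinely different. The paper does not run Lions concentration--compactness: it passes to a weak limit $u$, splits both $\int|\Delta u_n|^2$ and $\int|u_n|^q$ by weak convergence and the Brezis--Lieb lemma, and then observes that the remainder $r_n=u_n-u$ satisfies $\|r_n\|_{L^2}^2\le a^*/a$ for large $n$, so that \eqref{eq:inte-ineq} applied to $r_n$ gives $\int|\Delta r_n|^2\ge a\int|r_n|^q$ and hence $E_a\ge\cE_a(u)$ directly; the full mass $\|u\|_{L^2}=1$ is then recovered by the normalization identity $\cE_a(u)=\|u\|_{L^2}^2\,\cE_a(v)+(\|u\|_{L^2}^2-\|u\|_{L^2}^q)\,a\int|v|^q$ with $v=u/\|u\|_{L^2}$, using $E_a<0$ and $q>2$. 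This is exactly the strict subadditivity computation you defer to the dichotomy step, but deployed once at the end; it avoids having to set up the problem at infinity and the vanishing/dichotomy trichotomy. Your route should also work, but the "delicate points" you flag (the relative bound and dichotomy exclusion) are precisely where the paper's shortcut saves effort, so you should be aware the heavier machinery is not needed.

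There is one concrete gap: you use the untruncated trial function $Q_\lambda(x)=\lambda^{d/2}Q(\lambda(x-x_0))$ and assert that its potential energy stays bounded (respectively converges to $V(x_0)$) as $\lambda\to\infty$. Under (V1) the positive part of $V$ is only in $L^1_{\rm loc}$ and may grow arbitrarily fast at infinity, and under (V2) the same is true of $\max\{V,0\}$; since $Q$ has unbounded support, $\int V|Q_\lambda|^2$ may be $+\infty$ for every $\lambda$, so $Q_\lambda$ need not even be admissible for \eqref{eq:Ea} (which requires $V|u|^2\in L^1$), and the approximate-identity argument at a Lebesgue point does not apply to an unbounded, non-compactly-supported kernel. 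The paper avoids this by first truncating $Q$ to a $C_c^\infty$ function $\varphi_\eps$ with ${\rm supp}\,\varphi_\eps\subset B(0,\eps^{-1/6})$, $\int|\varphi_\eps|^2=1\ge\int|\Delta\varphi_\eps|^2$ and $a^*\int|\varphi_\eps|^q\ge1-\eps$, and then coupling the truncation parameter to the scaling ($\eps=1-a/a^*$, $\ell=\eps^{-1/5}$) so that the support shrinks after rescaling while the energy deficit still vanishes. Your argument needs this (or an equivalent cutoff) in the non-existence step, in the computation of $E_{a^*}$, and in the proof that $E_a<0$ for $a$ near $a^*$ under (V2).
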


\begin{theorem}[Blow-up]\label{thm:2} Let $V$ be as in Theorem \ref{thm:1}. Let $a_n \uparrow a^*$ and let $u_n$ be a minimizer for the variational problem $E_{a_n}$ in \eqref{eq:Ea}. Then $\{u_n\}$ blows up when $n\to \infty$, i.e.
$$
\lim_{n\to \infty} \|\Delta u_n \|_{L^2}=\infty.
$$
Moreover, up to a subsequence of $\{u_n\}$, there exist a sequence $\eps_n\to 0^+$, a sequence $\{x_n\}\subset \R^d$ and a minimizer $Q$ for the Gagliardo--Nirenberg interpolation inequality \eqref{eq:inte-ineq}, which satisfies 
$$\|\Delta Q\|_{L^2}=\|Q\|_{L^2}=a^*\int_{\R^d} |Q|^q=1,$$
such that 
$$
\lim_{n\to \infty} \, \eps_n^{d/2} u_n \Big(\eps_n (x+x_n)\Big) = Q(x)\quad \text{strongly in~}H^2(\R^d).
$$
\end{theorem}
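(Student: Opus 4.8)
\noindent The plan is to combine a Gagliardo--Nirenberg energy estimate with a blow-up rescaling and a concentration-compactness argument for the rescaled sequence. First I would record two-sided bounds on $E_{a_n}$. Inserting \eqref{eq:inte-ineq} into $\cE_a$ gives, under the mass constraint,
\[ \cE_a(u)\ge \Big(1-\frac{a}{a^*}\Big)\int_{\R^d}|\Delta u|^2\,\d x+\int_{\R^d}V|u|^2\,\d x, \]
while testing with the rescaled optimizer $u_\tau(x)=\tau^{d/2}Q(\tau(x-x_0))$ (which preserves the mass) yields $\cE_a(u_\tau)=\tau^4(1-a/a^*)+\int_{\R^d}V(x_0+y/\tau)|Q|^2\,\d y$, after using $\|\Delta Q\|_{L^2}^2=1$ and $a^*\int|Q|^q=1$. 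Letting $\tau\to\infty$ slowly as $a\uparrow a^*$ (for instance $\tau=(a^*-a)^{-1/8}$) pins down the asymptotics of $E_{a_n}$. The blow-up is then proved by contradiction: if $\|\Delta u_n\|_{L^2}$ stayed bounded along a subsequence, $\{u_n\}$ would be bounded in $H^2(\R^d)$, and passing to a limit $u_0$ with $\|u_0\|_{L^2}=1$ (strong in $L^2$ in case (V1), where the confining potential gives compactness; up to translations in case (V2)) while letting $a_n\uparrow a^*$ would exhibit $u_0$ as a minimizer for $E_{a^*}$, contradicting the non-existence statement of Theorem \ref{thm:1}; when $E_{a^*}=-\infty$ the contradiction is even more direct, since boundedness of $\|\Delta u_n\|_{L^2}$ makes every term of $\cE_{a_n}(u_n)$ bounded.

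With blow-up established, I would set $\eps_n=\|\Delta u_n\|_{L^2}^{-1/2}\to0^+$ and $w_n(x)=\eps_n^{d/2}u_n(\eps_n x)$, so that $\|w_n\|_{L^2}=\|\Delta w_n\|_{L^2}=1$. Since $q=2(1+4/d)$ is the critical power one has $\int|u_n|^q=\eps_n^{-4}\int|w_n|^q$, and the energy reads
\[ E_{a_n}=\eps_n^{-4}\Big(1-a_n\int_{\R^d}|w_n|^q\Big)+\int_{\R^d}V|u_n|^2, \]
where \eqref{eq:inte-ineq} forces $1-a_n\int|w_n|^q\ge 1-a_n/a^*>0$. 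The crucial point is to show $a_n\int|w_n|^q\to1$, i.e. that $w_n$ saturates \eqref{eq:inte-ineq}. This is exactly where the hypothesis $p_1>d/4$ is used: by H\"older's inequality together with the Sobolev embedding $H^2\hookrightarrow L^{2p_1'}$ one gets $\int V^-|u_n|^2\le C\eps_n^{-d/p_1}=o(\eps_n^{-4})$, so the potential energy is of lower order than $\eps_n^{-4}=\|\Delta u_n\|_{L^2}^2$. Dividing the energy identity by $\eps_n^{-4}$ and invoking the energy asymptotics from the first step then gives $a_n\int|w_n|^q\to1$.

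Thus $\{w_n\}$ is an optimizing sequence for \eqref{eq:inte-ineq} with $\|w_n\|_{L^2}=\|\Delta w_n\|_{L^2}=1$ and $\int|w_n|^q\to1/a^*$. I would run the concentration-compactness principle on it: vanishing is excluded because $\int|w_n|^q$ stays bounded away from $0$, and dichotomy is excluded by a standard strict subadditivity argument, so there are translations $x_n$ with $w_n(\cdot+x_n)\rightharpoonup Q\ne0$ weakly in $H^2(\R^d)$. A Brezis--Lieb splitting, together with the rigidity of equality in \eqref{eq:inte-ineq}, then forces $\|Q\|_{L^2}=\|\Delta Q\|_{L^2}=1$ and $a^*\int|Q|^q=1$ and shows that the $H^2$ norms converge, which upgrades weak to strong convergence $w_n(\cdot+x_n)\to Q$ in $H^2(\R^d)$. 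Since $w_n(x+x_n)=\eps_n^{d/2}u_n(\eps_n(x+x_n))$, this is precisely the asserted limit.

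The main obstacle is this last strong-convergence step — ruling out the vanishing and dichotomy alternatives for the rescaled sequence and identifying the limit as a genuine normalized minimizer of \eqref{eq:inte-ineq}. A secondary difficulty, specific to case (V2), is the quantitative bound on the potential energy after rescaling, for which the condition $p_1>d/4$ is sharp in making $\int V^-|u_n|^2$ negligible next to $\|\Delta u_n\|_{L^2}^2$.
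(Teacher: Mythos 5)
Your proposal is correct and follows essentially the same route as the paper: blow-up by contradiction with the non-existence of minimizers at $a=a^*$, rescaling by $\eps_n=\|\Delta u_n\|_{L^2}^{-1/2}$, showing the rescaled sequence saturates \eqref{eq:inte-ineq}, and then translation-compactness plus a Brezis--Lieb splitting to upgrade weak to strong $H^2$ convergence toward a normalized optimizer. The only cosmetic differences are that you control the potential term by the explicit bound $\int V^-|u_n|^2=O(\eps_n^{-d/p_1})=o(\eps_n^{-4})$ where the paper uses the $\eps$-form Sobolev inequality \eqref{eq:Sobolev}, and that you invoke Lions' concentration-compactness where the paper uses the equivalent lemmas of Fr\"ohlich--Lieb--Loss and Lieb to rule out vanishing.
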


Our work is motivated by recent works \cite{GuoSei-14,DenGuoLu-15,GuoZenZho-16} on the existence and blow-up behavior of the minimizers of the Gross-Pitaevskii functional in two dimensions. However, our problem is more difficult than the Gross-Pitaevskii model in many aspects. In particular, the blow-up result is more difficult because of the lack of the uniqueness result for the minimizers of \eqref{eq:inte-ineq} and because of the generality of the external potential $V$. Therefore, we need to use several new tools and ideas. 

We will prove Theorems \ref{thm:1} and \ref{thm:2}  in Section \ref{sec:thm1} and \ref{sec:thm2}, respectively. 

\section{Proof of Theorem \ref{thm:1}} \label{sec:thm1}

As a preliminary step, we have

\begin{lemma} \label{lem:ea->0} $\limsup_{a\uparrow a^*}E_a \le {\ess\inf}\, V.$
\end{lemma}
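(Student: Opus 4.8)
The plan is to produce an upper bound for $E_a$ by evaluating $\cE_a$ on a rescaled, concentrating copy of a Gagliardo--Nirenberg optimizer placed near a point where $V$ is close to ${\ess\inf}\,V$. Fix $\delta>0$. The set $\{x\in\R^d: V(x)<{\ess\inf}\,V+\delta\}$ has positive Lebesgue measure, and since $V\in L^1_{\rm loc}$ almost every point is a Lebesgue point of $V$; hence I may fix a Lebesgue point $x_0$ of $V$ with $V(x_0)\le {\ess\inf}\,V+\delta$. Let $Q$ be a minimizer of \eqref{eq:inte-ineq} normalized as in Theorem \ref{thm:2}, so that $\int_{\R^d}|\Delta Q|^2=\int_{\R^d}|Q|^2=1$ and $\int_{\R^d}|Q|^q=1/a^*$. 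Because $V$ may grow arbitrarily fast under (V1), I first make the trial function compactly supported: choose a smooth cutoff $\chi_L$ with $\chi_L\equiv1$ on the ball $B_L$ and $\supp\chi_L\subset B_{2L}$, and set $Q_L=\chi_L Q/\|\chi_L Q\|_{L^2}$, so that $Q_L\to Q$ in $H^2(\R^d)$ as $L\to\infty$ (hence also in $L^q$, as $q$ is $H^2$-subcritical). Then
$$ u(x)=\tau^{d/2}\,Q_L\big(\tau(x-x_0)\big), \qquad \tau>0, $$
satisfies $\|u\|_{L^2}=1$ and is bounded with $\supp u\subset B_{2L/\tau}(x_0)$, so that $V|u|^2\in L^1(\R^d)$ and $u$ is admissible for \eqref{eq:Ea}.

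Next I compute $\cE_a(u)$ using the scaling behind \eqref{eq:scale}. Since $q=2(1+4/d)$, both the biharmonic and the nonlinear terms scale like $\tau^4$, and the change of variables $y=\tau(x-x_0)$ in the potential term gives
$$ \cE_a(u)=\tau^4\Big(\int_{\R^d}|\Delta Q_L|^2 - a\int_{\R^d}|Q_L|^q\Big)+\int_{\R^d} V\!\big(x_0+y/\tau\big)\,|Q_L(y)|^2\,\d y. $$
As $L\to\infty$ the bracket tends to $1-a/a^*=(a^*-a)/a^*$, while the potential term is an average of $V$ against the probability density $|Q_L|^2$ which, after the substitution, is concentrated in $B_{2L/\tau}(x_0)$.

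Finally I let $a\uparrow a^*$ and couple $\tau=\tau(a)\to\infty$ and $L=L(a)\to\infty$ with $L/\tau\to0$, chosen so that $\tau^4(a^*-a)\to0$ and $\tau^4$ times the (vanishing) $H^2$-approximation error of $Q_L$ also tends to $0$; this is possible because that error can be driven below any prescribed rate by taking $L$ large (using the fast decay of $Q$) while still keeping $\tau\gg L$. With such a choice the first term of $\cE_a(u)$ tends to $0$. For the potential term, $L/\tau\to0$ forces the support $B_{2L/\tau}(x_0)$ to shrink to $x_0$, so an approximate-identity argument at the Lebesgue point $x_0$ (using a radially decreasing integrable majorant for $|Q|^2$, and absorbing the truncation) yields $\int_{\R^d} V(x_0+y/\tau)|Q_L(y)|^2\,\d y\to V(x_0)$. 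Hence $\limsup_{a\uparrow a^*}E_a\le V(x_0)\le {\ess\inf}\,V+\delta$, and letting $\delta\to0$ proves the claim. I expect the potential term to be the main obstacle: one must simultaneously concentrate the mass at $x_0$ (requiring $\tau\to\infty$), retain admissibility for possibly unbounded $V$ (forcing the truncation $Q_L$), and establish the mollifier convergence only at a Lebesgue point rather than at a continuity point; choosing the correct joint rate for $\tau(a)$ and $L(a)$ is what makes these compatible, whereas the kinetic and nonlinear terms are routine once the scaling is in place.
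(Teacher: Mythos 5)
Your proposal is correct and follows essentially the same route as the paper: a mass-normalized, compactly supported truncation of a Gagliardo--Nirenberg optimizer, rescaled to concentrate at a point $x_0$, with the concentration rate $\tau$ coupled to $a^*-a$ so that the $\tau^4$-term vanishes while the potential term localizes at $x_0$. The only difference is in the potential term, where you use an approximate-identity argument at a Lebesgue point while the paper bounds it by the essential supremum of $V$ over a shrinking ball; your version is in fact the more careful one, since for a general $V\in L^1_{\rm loc}$ the essential supremum over shrinking balls need not converge to $V(x_0)$ almost everywhere, whereas the weighted average against $|Q_L|^2$ does.
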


\begin{proof} Let $Q$ be a radial minimizer for the Gagliardo--Nirenberg interpolation inequality \eqref{eq:inte-ineq}. By a standard method, we can show that $Q$ decays sufficiently fast, see e.g. \cite{DL-07}. By modifying $Q$, for every $\eps>0$ sufficiently small, we can construct a function $\varphi_\eps \in C^\infty_c(\R^d)$ (a smooth function with compact support) such that
\bq \label{eq:density}
{\supp}\, \varphi_\eps \subset B(0, \eps^{-1/6}), \quad \int_{\R^d} |\varphi_\eps|^2 =1 \ge \int_{\R^d}|\Delta \varphi_\eps|^2, \quad a^* \int_{\R^d} |\varphi_\eps|^q \ge 1-\eps. 
\eq
For every $\ell>0$ and $x_0\in \R^d$, we consider the trial state
$$
u(x)=\ell^{d/2} \varphi_\eps (\ell(x-x_0)).
$$
Then rescaling as in \eqref{eq:scale} and using \eqref{eq:density}, we have
\begin{align} \label{eq:trial}
E_a \le \cE_a(u) &= \ell^4 \Big( \int_{\R^d}|\Delta \varphi_\eps|^2 - a \int_{\R^d} |\varphi_\eps|^{q} \Big) + \int_{\R^d} V(x/\ell+x_0)  |\varphi_\eps|^2 \d x \nn\\
&\le \ell^4 \Big(  1- \frac{a}{a^*} (1-\eps) \Big) + {\ess\sup} \{ V(x/\ell+x_0) \,|; |x|\le \eps^{-1/6} \}.
\end{align}
We can choose 
$$ \eps=1-\frac{a}{a^*}, \quad \ell= \eps^{-1/5}$$
to have
$$
\ell^4 \Big(  1- \frac{a}{a^*} (1-\eps) \Big) \to 0, \quad \eps^{-1/6}/\ell \to 0
$$
as $a\uparrow a^*$. Therefore, we deduce from \eqref{eq:trial} that
$$
\limsup_{a\uparrow a^*}E_a \le V(x_0)
$$
for a.e. $x_0\in \R^d$. This ends the proof. 
\end{proof}

Now we come to the non-existence part of Theorem \ref{thm:1}.

\begin{lemma} For every $V\in L^1_{\rm loc}(\R^d,\R)$, we have 
\begin{itemize}

\item $E_a=-\infty$ if $a>a^*$;

\item $E_{a^*}={\ess \inf} \, V$ but it has no minimizer except when $V$ is a constant. 
\end{itemize}
\end{lemma}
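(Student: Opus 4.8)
The plan is to treat the two bullets separately, reducing in each case to the scaling identity \eqref{eq:scale} and the sharp inequality \eqref{eq:inte-ineq}.

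For the first bullet, I would show $E_a=-\infty$ by exhibiting a concentrating family whose energy diverges. Reusing the smooth compactly supported functions $\varphi_\eps$ from the proof of Lemma~\ref{lem:ea->0} and the trial states $u(x)=\ell^{d/2}\varphi_\eps(\ell(x-x_0))$, the scaling \eqref{eq:scale} gives
$$\cE_a(u)=\ell^4\Big(\int_{\R^d}|\Delta\varphi_\eps|^2 - a\int_{\R^d}|\varphi_\eps|^q\Big) + \int_{\R^d}V(x/\ell+x_0)|\varphi_\eps(x)|^2\,\d x.$$
Since $a>a^*$, I would first fix $\eps$ small enough that $(a/a^*)(1-\eps)>1$, so that by \eqref{eq:density} the bracket is bounded above by $1-(a/a^*)(1-\eps)<0$. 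With $\eps$ (hence $\varphi_\eps$) now frozen, the potential term stays bounded as $\ell\to\infty$: choosing $x_0$ to be a Lebesgue point of $V$, the shrinking average $\int V(x/\ell+x_0)|\varphi_\eps|^2$ converges to $V(x_0)$. Letting $\ell\to\infty$ then forces $\cE_a(u)\to-\infty$, whence $E_a=-\infty$.

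For the second bullet, I would prove the identity $E_{a^*}={\ess\inf}\,V$ by matching bounds. The lower bound is immediate from \eqref{eq:inte-ineq}: on the constraint $\int|u|^2=1$ one has $\int|\Delta u|^2\ge a^*\int|u|^q$, so $\cE_{a^*}(u)\ge \int V|u|^2\ge {\ess\inf}\,V$. For the upper bound I would use that $a\mapsto E_a$ is non-increasing, since $\cE_a(u)$ decreases in $a$ for every admissible $u$; this gives $E_{a^*}\le E_a$ for every $a<a^*$, and combined with Lemma~\ref{lem:ea->0} it yields $E_{a^*}\le \limsup_{a\uparrow a^*}E_a\le {\ess\inf}\,V$. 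The two bounds give $E_{a^*}={\ess\inf}\,V$.

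It remains to analyze minimizers. Suppose $u$ attains $E_{a^*}={\ess\inf}\,V$. Tracing back the chain of inequalities in the lower bound, equality must hold at both steps: first $\int|\Delta u|^2 = a^*\int|u|^q$, so $u$ is an optimizer of \eqref{eq:inte-ineq}; second $\int (V-{\ess\inf}\,V)|u|^2=0$, and since the integrand is nonnegative a.e., $V={\ess\inf}\,V$ a.e. on the set $\{u\ne 0\}$. The crux is then to show that an optimizer $u$ of \eqref{eq:inte-ineq} has a Lebesgue-null zero set, so that $V={\ess\inf}\,V$ a.e. on $\R^d$, i.e. $V$ is essentially constant. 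Conversely, when $V\equiv c$ is constant, \eqref{eq:inte-ineq} shows $\cE_{a^*}(u)\ge c={\ess\inf}\,V$ with equality exactly for optimizers of \eqref{eq:inte-ineq}, which exist; so minimizers exist precisely when $V$ is constant.

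I expect the main obstacle to be the full-support claim for optimizers of \eqref{eq:inte-ineq}. By elliptic regularity any such $u$, after the normalizing dilation leading to \eqref{eq:inte-ineq-eq}, is smooth and solves $\Delta^2 u + u = |u|^{q-2}u$; strong unique continuation for the biharmonic operator, using $|\Delta^2 u|\le C|u|$ near the zero set, rules out vanishing on any open set. Upgrading this to the statement that $\{u=0\}$ is Lebesgue-null — which is what is genuinely needed, since $V$ is only $L^1_{\rm loc}$ and equality on a merely dense open set would not force $V$ to be constant — is the delicate point, and is where the absence of a maximum principle and of a uniqueness theory for \eqref{eq:inte-ineq-eq} makes the biharmonic case harder than its second-order analogue.
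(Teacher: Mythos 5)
Your proposal is correct and follows essentially the same route as the paper: the concentrating trial states with a negative bracket for $a>a^*$, monotonicity of $a\mapsto E_a$ plus Lemma \ref{lem:ea->0} against the Gagliardo--Nirenberg lower bound for $E_{a^*}={\ess\inf}\,V$, and strictness of \eqref{eq:istg} for optimizers of \eqref{eq:inte-ineq} to exclude minimizers. The one step you flag as delicate --- that such optimizers vanish only on a Lebesgue-null set --- is exactly the point the paper disposes of by citing \cite{DL-07,BL-16}, so no further idea is required there.
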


\begin{proof} First, we consider the case when $a>a^*$.  In this case, the fact $E_{a}=-\infty$ follows from the estimate \eqref{eq:trial} by choosing $\eps>0$ sufficiently small such that
$$ 1-\frac{a}{a^*}(1-\eps)<0$$
and then taking $\ell\to \infty$.

When $a=a^*$, from Lemma \ref{lem:ea->0} and the monotonicity of $a\mapsto E_{a}$, we find that 
$$E_{a^*}\le {\ess \inf} V.$$
On the other hand, by the Gagliardo--Nirenberg inequality \eqref{eq:inte-ineq} and the trivial inequality
\bq \label{eq:istg}
\int_{\R^d} V|u|^2 \ge \Big( {\ess \inf} V \Big) \int_{\R^d} |u|^2
\eq
we get 
$$
E_{a^*}\ge {\ess \inf} V. 
$$
Thus $E_{a^*}= {\ess \inf} V$. However, if $V \not\equiv {\rm constant}$, then $E_{a^*}$ has no minimizer because the inequality \eqref{eq:istg} is strict when $u$ is a minimizer for the interpolation inequality \eqref{eq:inte-ineq}. The latter claim is a consequence of the known fact that any minimizer of \eqref{eq:inte-ineq} does not vanish in a set of positive measure, which can be deduced using the Euler-Lagrange equation \eqref{eq:inte-ineq-eq}, see e.g. \cite{DL-07,BL-16}.
\end{proof}

Now we come to the existence part of Theorem \ref{thm:1}. The proof is divided into two cases.

\begin{lemma} Let $V$ satisfy condition (V1) in Theorem \ref{thm:1}. Then $E_a$ has a minimizer for all $a\in (0,a^*)$.
\end{lemma}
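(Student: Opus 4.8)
The plan is to prove existence of a minimizer for $E_a$ under condition (V1) via the direct method of the calculus of variations. The key structural advantage of (V1) is that $V \ge 0$ and $V(x) \to \infty$ as $|x| \to \infty$, which produces a confining effect that rules out mass escaping to infinity. First I would establish that $E_a$ is finite: boundedness from below follows from the Gagliardo--Nirenberg inequality \eqref{eq:inte-ineq} with $a < a^*$, since $\int |\Delta u|^2 - a \int |u|^q \ge (1 - a/a^*)\int |\Delta u|^2 \ge 0$ on the mass constraint, so $\cE_a(u) \ge \int V |u|^2 \ge 0$; finiteness from above comes from Lemma \ref{lem:ea->0} or simply from testing with any admissible trial function.

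Next I would take a minimizing sequence $\{u_n\}$ with $\int |u_n|^2 = 1$ and $\cE_a(u_n) \to E_a$. The coercivity estimate above gives uniform bounds: since $(1 - a/a^*)\int |\Delta u_n|^2 + \int V |u_n|^2 \le \cE_a(u_n) \le E_a + 1$, I obtain that $\int |\Delta u_n|^2$ is uniformly bounded (hence $\{u_n\}$ is bounded in $H^2(\R^d)$ after controlling the $L^2$-norm of the gradient by interpolation between $\|u_n\|_{L^2}$ and $\|\Delta u_n\|_{L^2}$), and crucially that $\int V |u_n|^2$ is uniformly bounded. Passing to a subsequence, $u_n \wto u$ weakly in $H^2(\R^d)$, and by Rellich--Kondrachov $u_n \to u$ strongly in $L^2_{\rm loc}$ and in $L^q_{\rm loc}$.

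The heart of the argument is to upgrade local convergence to global convergence of the mass, which is where the confinement $V(x) \to \infty$ enters decisively. The uniform bound on $\int V |u_n|^2$ together with $V(x) \to \infty$ prevents mass from leaking to infinity: for any $\delta > 0$, choosing $R$ so that $V \ge \delta^{-1}$ on $|x| > R$ gives $\int_{|x|>R} |u_n|^2 \le \delta \int V|u_n|^2 \le C\delta$, a tightness statement uniform in $n$. Combined with local $L^2$-convergence, this yields $u_n \to u$ strongly in $L^2(\R^d)$, so $\int |u|^2 = 1$ and $u$ is admissible; it also upgrades the $L^q$-convergence to global convergence $\int |u_n|^q \to \int |u|^q$ (using the $H^2$-bound to control the $L^q$-tail via the same tightness and Gagliardo--Nirenberg). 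Then weak lower semicontinuity of $\int |\Delta u|^2$ and of $\int V|u|^2$ (the latter by Fatou, since $V \ge 0$), together with the now-continuous nonlinear term, gives $\cE_a(u) \le \liminf \cE_a(u_n) = E_a$, forcing $u$ to be a minimizer.

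I expect the main obstacle to be rigorously justifying the strong $L^q$-convergence of the nonlinear term $\int |u_n|^q \to \int |u|^q$, since $q$ is the mass-critical exponent and the $L^q$-embedding is exactly borderline; one cannot simply invoke compact embedding globally. The resolution is precisely the tightness provided by (V1): the uniform bound $\int V|u_n|^2 \le C$ gives equi-integrability at infinity, which combined with local compactness in $L^q$ (valid since $H^2 \hookrightarrow L^q$ is locally compact) closes the gap. A secondary technical point is verifying $\int V|u|^2 < \infty$ for the limit, i.e. that $u$ lies in the admissible class; this follows from Fatou applied to $V|u_n|^2$ since $V \ge 0$.
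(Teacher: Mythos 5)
Your proposal is correct and follows essentially the same route as the paper: the direct method with coercivity from Gagliardo--Nirenberg and $V\ge 0$, compactness of the minimizing sequence in $L^p$ obtained from the confinement $V(x)\to\infty$ (which the paper states tersely and you spell out via the tightness bound $\int_{|x|>R}|u_n|^2\le \delta\int V|u_n|^2$), weak lower semicontinuity of $\int|\Delta u|^2$, and Fatou for the potential term. No gaps; the extra detail you supply on upgrading local to global $L^q$ convergence is exactly the justification implicit in the paper's one-line compactness claim.
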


\begin{proof} Let $\{u_n\}$ be a minimizing sequence for $E_a$. Using $V\ge 0$ and the Gagliardo--Nirenberg inequality \eqref{eq:inte-ineq} we find that both $\|u_n\|_{H^2}(\R^d)$ and $\int_{\R^d} V|u_n|^2$ are bounded. Using Sobolev's embedding  and the fact that $V(x)\to \infty$ as $|x|\to \infty$, we conclude that up to a subsequence, $u_n$ converges to a function $u$ weakly in $H^2(\R^d)$ and strongly in $L^p(\R^d)$ for all $p\in [2,4^*)$ where $4^*$ is the critical power of the Sobolev embedding of $H^2(\R^d)$, i.e. $4^*=+\infty$ if $d\le 4$ and $4^*=2d/(d-4)$ if $d\ge 5$. 

In particular, when $n\to \infty$ we have
$$
1=\int_{\R^d} |u_n|^2 \to \int_{\R^d} |u|^2
$$
and
$$ \int_{\R^d} |u_n|^q \to \int_{\R^d} |u|^q.$$
Moreover, 
$$
\int_{\R^d} |\Delta u_n|^2 \ge  \int_{\R^d} |\Delta u|^2 + o(1)_{n\to \infty},
$$
since $\Delta u_n \wto \Delta u$ weakly in $L^2(\R^d)$ and 
$$
\int_{\R^d} V |u_n|^2 \ge \int_{\R^d} V |u|^2 + o(1)_{n\to \infty}
$$
by Fatou's lemma (the strong convergence $u_n\to u$ in $L^p(\R^d)$ implies that, up to a subsequence, $u_n\to u$ pointwise).

Thus
$$
E_a=\lim_{n\to \infty} \cE_a(u_n)\ge \cE_a(u).
$$
Therefore, $u$ is a minimizer for $E_a$. 
\end{proof}
\begin{remark} From the above proof we also conclude that 
$$
\limsup_{n\to\infty}\int_{\R^d} |\Delta u_n|^2 \le  \int_{\R^d} |\Delta u|^2
$$
and hence $\Delta u_n\to \Delta u$ strongly in $L^2(\R^d)$.

Furthermore, we denote the Fourier transforms of $f$ as $\widehat f$. Since $\widehat {D^\alpha f}(x)=(2\pi i x)^\alpha \widehat {f} (x)$ and by Plancherel's Theorem, we have  
\bqq && {\left\| {\Delta \left( {u - {u_n}} \right)} \right\|^2_{{L^2}}} + (4\pi^2)^2{\left\| {u - {u_n}} \right\|^2_{{L^2}}}\\
 &=& \left\|\widehat { {\Delta \left( {u - {u_n}} \right)} }\right\|^2_{{L^2}}+(4\pi^2)^2\left\| {\widehat {u - {u_n}}} \right\|^2_{{L^2}}\\
 &=&\left\|4\pi^2|x|^2\widehat { { \left( {u - {u_n}} \right)} }\right\|^2_{{L^2}}+(4\pi^2)^2\left\| {\widehat {u - {u_n}}} \right\|^2_{{L^2}}\\
 &=&(4\pi^2)^2 \left\|\sqrt{1+|x|^4}\widehat { { \left( {u - {u_n}} \right)} }\right\|^2_{{L^2}}\\
 &\ge & C \left\|\widehat { { D^\alpha\left( {u - {u_n}} \right)} }\right\|^2_{{L^2}}\\
&=&C \left\| { { D^\alpha\left( {u - {u_n}} \right)} }\right\|^2_{{L^2}}, \forall 1\le |\alpha|\le 2,
  \eqq
  for a constant $C$.
  
  Therefore $ u_n\to u$ strongly in $H^2(\R^d)$.

\end{remark}

\begin{lemma} Let $V$ satisfy condition (V2) in Theorem \ref{thm:1}. Then there exists a constant $a_*\in (0,a^*)$ such that $E_a$ has a minimizer for all $a\in (a_*,a^*)$.
\end{lemma}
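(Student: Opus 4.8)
The plan is to run the direct method, the decisive new input being that the binding energy $E_a$ is \emph{strictly negative} as $a\uparrow a^*$. I would split the argument into three steps: a relative form bound on the negative part of $V$, a strict binding inequality coming from the $L^2$-critical scaling, and a compactness step for the minimizing sequence carried out directly on weak limits (without bookkeeping of translations).

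First I would establish a relative (form) boundedness of the negative part $V_-:=\max\{-V,0\}\ge 0$ with respect to the biharmonic term. Since $\min\{V,0\}=-V_-\in L^{p_1}+L^{p_2}$, write $V_-=W_1+W_2$ with $W_i\ge 0$, $W_i\in L^{p_i}$. By H\"older, $\int W_i|u|^2\le \|W_i\|_{L^{p_i}}\|u\|_{L^{2p_i'}}^2$, and because $p_i>\max\{1,d/4\}$ the conjugate exponent satisfies $2p_i'<4^*$; interpolating through \eqref{eq:inte-ineq} and applying Young's inequality yields, for every $\delta>0$, a constant $C_\delta$ with $\int V_-|u|^2\le \delta\|\Delta u\|_{L^2}^2+C_\delta\|u\|_{L^2}^2$ for all $u\in H^2(\R^d)$. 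Combined with \eqref{eq:inte-ineq} (which gives $a\int|u|^q\le (a/a^*)\|\Delta u\|_{L^2}^2$ on the mass constraint) and $V_+\ge 0$, this shows $\cE_a$ is bounded below and that any minimizing sequence is bounded in $H^2(\R^d)$ for each fixed $a<a^*$.

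Next I would fix $a_*$ and record the binding inequality. By Lemma \ref{lem:ea->0} we have $\limsup_{a\uparrow a^*}E_a\le \ess\inf V<0$ under (V2), so there is $a_*\in(0,a^*)$ with $E_a<0$ for all $a\in(a_*,a^*)$. I pair this with an elementary scaling identity: for any $w$ with $\|w\|_{L^2}^2=m\le 1$, writing $w=\sqrt m\,v$ gives $\cE_a(w)=m\,\cE_a(v)+a(m-m^{q/2})\int|v|^q$, and since $q/2=1+4/d>1$ forces $m^{q/2}\le m$, one obtains $E_a^{(m)}\ge mE_a$, where $E_a^{(m)}$ is the infimum of $\cE_a$ over mass $m$. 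As $E_a<0$, this gives the strict binding inequality $E_a^{(m)}>E_a$ for every $m\in(0,1)$. Now take a minimizing sequence $\{u_n\}$, extract $u_n\rightharpoonup u$ in $H^2$, and set $r_n=u_n-u\rightharpoonup 0$. Splitting $\{|x|\le R\}$ versus $\{|x|>R\}$ and using the compact embedding $H^2(B_R)\hookrightarrow\hookrightarrow L^{2p_i'}(B_R)$ (valid exactly because $2p_i'<4^*$) together with the tail smallness $\|W_i\mathbf{1}_{\{|x|>R\}}\|_{L^{p_i}}\to 0$, I would show $\int V_-|r_n|^2\to 0$ and $\int V_-\bar u\,r_n\to 0$, hence $\int V_-|u_n|^2\to\int V_-|u|^2$; for $V_+\ge 0$, Fatou gives $\liminf\int V_+|u_n|^2\ge\int V_+|u|^2$. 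Combining these with $\|\Delta u_n\|_{L^2}^2=\|\Delta u\|_{L^2}^2+\|\Delta r_n\|_{L^2}^2+o(1)$ and the Brezis--Lieb lemma for $\int|u_n|^q$, I get $\cE_a(u_n)\ge \cE_a(u)+\big(\|\Delta r_n\|_{L^2}^2-a\int|r_n|^q\big)+o(1)$, where the bracket is $\ge 0$ by \eqref{eq:inte-ineq} since $a<a^*$ and $\|r_n\|_{L^2}^2\le 1+o(1)$. Thus $\cE_a(u)\le E_a<0$, so $u\neq 0$ and $m:=\|u\|_{L^2}^2\in(0,1]$; admissibility of $u$ (note $\int V_+|u|^2<\infty$ by Fatou and $\int V_-|u|^2<\infty$ by the form bound) gives $\cE_a(u)\ge E_a^{(m)}$, whence $E_a\ge E_a^{(m)}$, which the strict binding inequality forbids unless $m=1$. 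Therefore $\|u\|_{L^2}^2=1$ and $\cE_a(u)=E_a$, so $u$ is a minimizer.

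The main obstacle, I expect, is this compactness step and, within it, the treatment of the general potential. Since $V_+$ is only assumed in $L^1_{\mathrm{loc}}$ and the problem is not confining, nothing rules out loss of mass to spatial infinity a priori; the two ways this can happen are defeated separately: total escape ($u=0$) is excluded by $E_a<0$ (there $\int V_-|u_n|^2\to 0$ would force $\liminf\cE_a(u_n)\ge 0$), and partial loss ($0<m<1$) is excluded by the strict scaling inequality $E_a^{(m)}>E_a$. The delicate quantitative point is the convergence $\int V_-|u_n|^2\to\int V_-|u|^2$, which is precisely where the hypothesis $p_1>d/4$ (equivalently $2p_i'<4^*$) is essential.
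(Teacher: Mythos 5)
Your proposal is correct and follows essentially the same route as the paper: a form bound $\int V_-|u|^2\le\delta\|\Delta u\|_{L^2}^2+C_\delta\|u\|_{L^2}^2$ exploiting $p_i>\max\{1,d/4\}$ (the paper gets the same bound \eqref{eq:Sobolev} by splitting each $L^{p_i}$ piece into a small part and an $L^\infty$ part rather than by Young's inequality), then Brezis--Lieb plus the Gagliardo--Nirenberg inequality applied to the remainder $r_n$ to obtain $\cE_a(u)\le E_a$, and finally the mass-scaling identity combined with $E_a<0$ to rule out $\|u\|_{L^2}<1$ --- which is exactly the paper's Step 3, just packaged as a binding inequality stated up front. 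If anything, your justification of $\int V_-|u_n|^2\to\int V_-|u|^2$ via local compact embedding plus tail smallness is more explicit than the paper's one-line appeal to weak $L^p$ convergence.
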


\begin{proof} From Lemma \ref{lem:ea->0} and condition (V2), we have
$$
\lim_{a\uparrow a^*} E_a = {\ess \inf}\, V <0.
$$
Therefore, we can find $a_*\in (0,a^*)$ such that 
\bq \label{eq:Ea<0}
E_a<0, \quad \forall a\in (a_*,a^*).
\eq
Now let us prove that $E_a$ has a minimizer for all $a\in (a_*,a^*)$. \\

{\bf Step 1}. First, let us prove that $E_a>-\infty$. For every $\eps>0$, using the assumption (V2), we can show that for all $\eps>0$,
\bq \label{eq:Sobolev}
\eps \int_{\R^d} |\Delta u|^2 + \int V |u|^2 \ge -C_\eps>-\infty, \quad \forall u\in H^2(\R^d), \int |u|^2=1.
\eq
To prove \eqref{eq:Sobolev}, we observe that any function $f\in L^p(\R^d)$ can be decomposed into 
$$f=f_L+g_L, \quad f_L= f \chi (|f|\ge L) , \quad g_L= f \chi (|f|<L) $$
where $\chi (|f|\ge L)$ is the characteristic function of the set $\{x\in \R^d: f(x)\ge L\}$. By Lebesgue Dominated Convergence,
$$
\lim_{L\to \infty} \|f_L\|_{L^p}=0.
$$
Therefore, for every $\eps>0$ we can choose $L$ sufficiently large such that
$$ \|f_L\|_{L^p} \le \eps, \quad \|g_L\|_{L^\infty}<\infty.$$
From this observation and the assumption (V2), we can decompose
$$
\min\{V,0\}=V_1+V_2 + V_3
$$
where
\bq \label{eq:dec-1}
\|V_1\|_{L^{p_1}}\le \eps, \quad \|V_2\|_{L^{p_2}} \le \eps, \quad \|V_3\|_{L^\infty}<\infty.
\eq
On the other hand, using Sobolev's embedding $H^2(\R^d)\subset L^{s}(\R^d)$ for all $s\in [2,4^*)$ (recall that $4^*=+\infty$ if $d\le 4$ and $4^*=2d/(d-4)$ if $d\ge 5$) and H\"older's inequality, we obtain
\bq \label{eq:dec-2}
\int_{\R^d} V|u|^2 \ge \sum_{i=1}^3 \int_{\R^d} V_i |u|^2 \ge - C \Big( \|V_1\|_{L^{p_1}} + \|V_2\|_{L^{p_2}} \Big) \|u\|_{H^2}^2 - \|V_3\|_{L^\infty} \int_{\R^d} |u|^2.
\eq
for a constant $C$ independent of $V$ and $u$. Inserting \eqref{eq:dec-1} into \eqref{eq:dec-2}, we obtain
$$
\int_{\R^d} V|u|^2 \ge -C \eps  \|u\|_{H^2}^2 - C_\eps \int_{\R^d} |u|^2
$$
for all $u\in H^2$ and for all $\eps>0$, where $C_\eps$ is a finite constant depending on $\eps>0$. This inequality is equivalent to \eqref{eq:Sobolev}. 

From \eqref{eq:Sobolev} and the Gagliardo--Nirenberg inequality \eqref{eq:inte-ineq}, we find that for all $\eps>0$,
\bq \label{eq:E-lb}
\cE_a(u) \ge \Big( 1-\frac{a}{a^*}-\eps\Big) \int_{\R^d} |\Delta u|^2 - C_\eps, \quad \forall u\in H^2(\R^d), \int |u|^2=1.
\eq
Of course, we can choose $\eps>0$ sufficiently small such that 
$$  1-\frac{a}{a^*}-\eps >0.$$
We then conclude from \eqref{eq:E-lb} that $E_a>-\infty.$\\

{\bf Step 2.} Let $\{u_n\}$ be a minimizing sequence for $E_a$. From \eqref{eq:E-lb} we obtain that $\{u_n\}$ is bounded in $H^2(\R^d)$. By Sobolev's embedding, up to a subsequence, $u_n$ converges to a function $u$ weakly in $H^2(\R^d)$ and pointwise.  

Now let us pass $n\to \infty$ in the energy functional $\cE_a(u_n)$. First, we have
\bq \label{eq:plm-1}
\int_{\R^d} V |u_n|^2 \ge \int_{\R^d}  V |u|^2 + o(1)_{n\to \infty}.
\eq
Indeed, by Fatou's lemma,
$$
\int_{\R^d} \max\{V,0\} |u_n|^2 \ge \int_{\R^d} \max\{V,0\} |u|^2 + o(1)_{n\to \infty}.
$$
Moreover, since $u_n \wto u$ weakly in $L^p(\R^d)$ for all $p\in [2,4^*)$ (by Sobolev's embedding $H^2(\R^d)\subset L^p(\R^d)$) and the condition in (V2),
$$\min\{V,0\}\in L^{p_1}(\R^d)+L^{p_2}(\R^d), \quad \max\{1,d/4\}<p_1<p_2<\infty,$$
we get
$$
\int_{\R^d} \min\{V,0\} |u_n|^2 = \int_{\R^d} \min \{V,0\} |u|^2 + o(1)_{n\to \infty}.
$$
Thus \eqref{eq:plm-1} holds true.

Next, since $\Delta u_n \wto \Delta u$ weakly in $L^2(\R^d)$, we have
$$
\int_{\R^d} \Big( \Delta u_n - \Delta u \Big) {\Delta u} = o(1)_{n\to \infty}.
$$
This allows us to decompose
\bq \label{eq:plm-2}
\int_{\R^d} |\Delta u_n|^2 = \int_{\R^d} |\Delta u|^2 +  \int_{\R^d} |\Delta (u_n-u)|^2 + o(1)_{n\to \infty}.
\eq
For the nonlinear term, using the pointwise convergence $u_n\wto u$ and Brezis-Lieb's refinement of Fatou's lemma \cite{BreLie-83}, we obtain
\bq \label{eq:plm-3}
\int_{\R^d} |u_n|^q = \int_{\R^d} |u|^q +  \int_{\R^d} |u_n-u|^q + o(1)_{n\to \infty}.
\eq

Putting all estimates \eqref{eq:plm-1}-\eqref{eq:plm-2}-\eqref{eq:plm-3} together, we find that
\bq \label{eq:plm-4}
\cE_a(u_n) \ge \cE_a(u)+ \int_{\R^d} |\Delta (u_n-u)|^2 - a \int_{\R^d} |u_n-u|^q + o(1)_{n\to \infty}.
\eq 

Since $u_n\wto u$ weakly in $L^2(\R^d)$, similarly to \eqref{eq:plm-2} we have
\bq \label{eq:plm-norm}
1= \int_{\R^d} |u_n|^2 = \int_{\R^d} |u|^2 + \int_{\R^d} |u_n-u|^2 + o(1)_{n\to \infty}.
\eq
Thus $\|u_n-u\|_{L^2}\le 1 + o(1)_{n\to \infty}$, and hence 
$$\|u_n-u\|_{L^2}^2 \le \frac{a^*}{a}$$
if $n$ is sufficiently large. Consequently, by the Gagliardo--Nirenberg inequality \eqref{eq:inte-ineq}, 
$$
 \int_{\R^d} |\Delta (u_n-u)|^2 \ge \frac{a^*} {\|u_n-u\|_{L^2}^2} \int_{\R^d} |u_n-u|^q \ge a \int_{\R^d} |u_n-u|^q,
$$
if $n$ is sufficiently large. Thus \eqref{eq:plm-4} implies that 
$$
\cE_a(u_n) \ge \cE_a(u) + o(1)_{n\to \infty}
$$
and hence
\bq \label{eq:plm-final}
E_a \ge  \cE_a(u).
\eq

{\bf Step 3.} To conclude, we need to show that $\|u\|_{L^2}=1$. From  \eqref{eq:plm-norm}, we have known that $\|u\|_{L^2}\le 1$. Moreover, $u\not\equiv 0$ because 
$$\cE_a(u)\le E_a<0$$
due to \eqref{eq:plm-final} and \eqref{eq:Ea<0}. Thus we can normalize
$$ v:=\frac{u}{\|u\|_{L^2}}$$
and write
$$
\cE_a(u) = \|u\|_{L^2}^2 \cE_a(v) + \Big( \|u\|_{L^2}^2 - \|u\|_{L^2}^q \Big) a \int |v|^q. 
$$
Using
$$\cE_a(v)\ge E_a\ge \cE_a(u)$$
we find that
$$
E_a \ge \|u\|_{L^2}^2 E_a  + \Big( \|u\|_{L^2}^2 - \|u\|_{L^2}^q \Big) a \int |v|^q. 
$$
Since  $\|u\|_{L^2}\le 1$, $E_a<0$ and 
$$q=2\Big( 1+ \frac{4}{d}\Big) > 2$$
we then conclude that $ \|u\|_{L^2}=1$. Thus $u$ is a minimizer for $E_a$. 
\end{proof}

\section{Proof of Theorem \ref{thm:2}} \label{sec:thm2}

\begin{proof} Let $a_n \uparrow a^*$ and let $u_n$ be a minimizer for $E_{a_n}$. \\

{\bf Step 1.} First, we show that
\bq \label{eq:bl1}
\lim_{n\to \infty}\|\Delta u_n\|_{L^2}= \infty.
\eq
We assume by contradiction that, up to a subsequence, $\{u_n\}$ is  bounded in $H^2(\R^d)$. Then, up to a subsequence again, we can assume that $u_n$ converges to a function $u$ weakly in $H^2(\R^d)$ and pointwise. By following the proof of Theorem \ref{thm:1}, we obtain that $u$ is a minimizer for $E_{a^*}$. However, this contradicts to the fact that $E_{a^*}$ has no minimizer. Thus \eqref{eq:bl1} holds true.

{\bf Step 2.} Define
$$
\eps_n := \|\Delta u_n\|_{L^2}^{-1/2} \quad \text{and}\quad w_n(x):=\eps_n^{d/2} u_n(\eps_n x).
$$
Then we have
\bq \label{eq:Dun-1}
\|\Delta w_n\|_{L^2} = \|w_n\|_{L^2}=1, \quad \forall n\in \mathbb{N}.
\eq
We will show that $\{w_n\}$ is a minimizing sequence for the Gagliardo--Nirenberg inequality \eqref{eq:inte-ineq}, i.e.
\bq \label{eq:Dun-2}
\lim_{n\to \infty}a^* \int_{\R^d} |w_n|^q = 1.
\eq
Of course, by \eqref{eq:inte-ineq}, we have immediately the upper bound
$$
\limsup_{n\to \infty}a^* \int_{\R^d} |w_n|^q \le \|\Delta w_n\|_{L^2}^2=1.
$$
It remains to prove the lower bound. Recall that when $V$ satisfies either (V1) or (V2), we have Sobolev-type inequality \eqref{eq:Sobolev}, and hence
$$
\eps \int_{\R^d} |\Delta u_n|^2 + \int V |u_n|^2 \ge -C_\eps>\infty, \quad \forall \eps>0.
$$
Therefore,
\begin{align*}
\cE_{a_n}(u_n) &\ge (1-\eps) \int_{\R^d} |\Delta u_n|^2 -a^*\int_{\R^d} |u_n|^q - C_\eps \\
&= \eps_n^{-4} \Big( (1-\eps) \int_{\R^d} |\Delta w_n|^2 - a^* \int_{\R^d} |w_n|^q \Big) - C_\eps, \quad \forall \eps>0.
\end{align*}
On the other hand, by Lemma \ref{lem:ea->0},
$$
\lim_{n\to \infty} \cE_{a_n}(u_n)  = \lim_{n\to \infty} E_{a_n} =  {\ess \inf}\, V < \infty.
$$
Consequently, 
$$
\limsup_{n\to \infty} \eps_n^{-4} \Big( (1-\eps) \int_{\R^d} |\Delta w_n|^2 -a^* \int_{\R^d} |w_n|^q \Big) <\infty, \quad \forall \eps>0.
$$
Since $\eps_n\to 0$ and $\|\Delta w_n\|^2=1$, we obtain
$$
\liminf_{n\to \infty}  a^* \int_{\R^d} |w_n|^q  \ge 1.
$$
Thus \eqref{eq:Dun-2} holds true.\\

{\bf Step 3.} Now we use \eqref{eq:Dun-1} and \eqref{eq:Dun-2} to prove that, up to subsequences and translations, $w_n$ converges strongly in $H^2(\R^d)$ to a minimizer for the Gagliardo--Nirenberg inequality \eqref{eq:inte-ineq}. We will need two useful tools taken from  \cite[Lemma 2.1]{FLL-86} and  \cite[Lemma 6]{Lieb-83}. 

\begin{lemma} \label{lem:tool1} Let $p_1<p_2<p_3$. Let $\{f_n\}$ be a bounded sequence in $L^{p_1}(\R^d)\cap L^{p_3}(\R^d)$ such that
$$\liminf_{n\to \infty} \|f_n\|_{L^{p_2}} >0 .$$
Then there exists $\eta>0$ such that
\bq \label{eq:fn-dd}
 \liminf_{n\to \infty} \Big| \{x\in \R^d\,|\,|f_n(x)|\ge \eta\} \Big| >0.
 \eq
\end{lemma}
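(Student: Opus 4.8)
The plan is to argue by contradiction through a three-region (``bathtub'') decomposition of the integral $\int_{\R^d} |f_n|^{p_2}$, using the $L^{p_1}$ bound to absorb the part where $|f_n|$ is small and the $L^{p_3}$ bound to absorb the part where $|f_n|$ is large. The guiding heuristic is that if the level sets $\{|f_n|\ge \eta\}$ had measure tending to $0$ for \emph{every} $\eta>0$, then the $L^{p_2}$ mass of $f_n$ could only be carried by very small values (penalized by $p_1<p_2$) or by very large values living on a small set (penalized by $p_3>p_2$), and neither mechanism can sustain an $L^{p_2}$ norm bounded away from zero.

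Concretely, I would first set $\delta:=\liminf_{n\to\infty}\int_{\R^d}|f_n|^{p_2}$, which is positive since it equals $(\liminf_n\|f_n\|_{L^{p_2}})^{p_2}>0$, and let $C_1,C_3$ be uniform bounds for $\int|f_n|^{p_1}$ and $\int|f_n|^{p_3}$. For parameters $0<\eta<M$ I split $\R^d$ into $\{|f_n|<\eta\}$, $\{\eta\le|f_n|\le M\}$, and $\{|f_n|>M\}$. On the first region $|f_n|^{p_2}\le \eta^{p_2-p_1}|f_n|^{p_1}$ gives $\int_{|f_n|<\eta}|f_n|^{p_2}\le \eta^{p_2-p_1}C_1$; on the third region $|f_n|^{p_2}\le M^{p_2-p_3}|f_n|^{p_3}$ (valid because $p_2-p_3<0$ and $|f_n|>M$) gives $\int_{|f_n|>M}|f_n|^{p_2}\le M^{p_2-p_3}C_3$; and on the middle region the crude bound $\int_{\eta\le|f_n|\le M}|f_n|^{p_2}\le M^{p_2}\,|\{|f_n|\ge\eta\}|$ isolates the level-set measure.

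Summing the three estimates yields $\int|f_n|^{p_2}\le \eta^{p_2-p_1}C_1+M^{p_2-p_3}C_3+M^{p_2}\,|\{|f_n|\ge\eta\}|$. I would then fix the cutoffs in the right order: first choose $\eta$ small enough that $\eta^{p_2-p_1}C_1\le \delta/4$ (possible since $p_2-p_1>0$), then choose $M$ large enough that $M^{p_2-p_3}C_3\le \delta/4$ (possible since $p_2-p_3<0$), which also guarantees $\eta<M$. For all $n$ large enough that $\int|f_n|^{p_2}\ge \delta/2$, the estimate forces $M^{p_2}\,|\{|f_n|\ge\eta\}|\ge \delta/4$, hence $\liminf_{n\to\infty}|\{|f_n|\ge\eta\}|\ge \delta/(4M^{p_2})>0$, which is exactly \eqref{eq:fn-dd}.

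The argument is elementary and I do not anticipate a genuine obstacle. The only point demanding care is the \emph{order} in which the two thresholds are selected: $\eta$ and $M$ must be chosen independently of $n$ (small $\eta$ first, then large $M$) so that the small-value and large-value contributions are each swallowed by a fixed fraction of $\delta$ before the middle term is isolated. One should also record that the resulting $\eta$ is uniform in $n$, as the statement requires.
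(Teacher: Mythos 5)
The paper does not actually prove this lemma: it is quoted as a known tool from Fr\"ohlich--Lieb--Loss \cite[Lemma 2.1]{FLL-86}, and the argument there is precisely the three-region truncation you describe, so your approach is the standard one and supplies the proof the paper omits. All the structural steps are sound: the pointwise bounds $|f_n|^{p_2}\le\eta^{p_2-p_1}|f_n|^{p_1}$ on $\{|f_n|<\eta\}$ and $|f_n|^{p_2}\le M^{p_2-p_3}|f_n|^{p_3}$ on $\{|f_n|>M\}$ are valid because $p_2-p_1>0$ and $p_2-p_3<0$, the middle term correctly isolates $|\{|f_n|\ge\eta\}|$, and you rightly insist that $\eta$ and $M$ be fixed independently of $n$.

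The one flaw is arithmetic at the final step. With each tail bounded by $\delta/4$ and only $\int|f_n|^{p_2}\ge\delta/2$ for large $n$, your inequality gives $M^{p_2}\,|\{|f_n|\ge\eta\}|\ge \delta/2-\delta/4-\delta/4=0$, which is vacuous, not the claimed $\ge\delta/4$. The repair is immediate: since $\liminf_n\int|f_n|^{p_2}=\delta$, you may take $n$ large enough that $\int|f_n|^{p_2}\ge 3\delta/4$ (or, alternatively, shrink each tail to at most $\delta/8$), after which $M^{p_2}\,|\{|f_n|\ge\eta\}|\ge\delta/4$ and hence $\liminf_n|\{|f_n|\ge\eta\}|\ge\delta/(4M^{p_2})>0$ as desired. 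With that adjustment the proof is complete.
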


\begin{lemma} \label{lem:tool2} Let $\{f_n\}$ be a bounded sequence in $H^1(\R^d)$ such that \eqref{eq:fn-dd} holds for some $\eta>0$. Then up to a subsequence of $\{f_n\}$, there exist a sequence $\{x_n\}\subset \R^d$ and $0\not\equiv f \in H^1(\R^d)$ such that 
$$f_n(.+x_n)\to f \quad \text{weakly in~}H^1(\R^d).$$
\end{lemma}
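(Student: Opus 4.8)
The plan is to recast the statement as a non-vanishing assertion in the sense of concentration-compactness and then to localize the mass of $f_n$ into a ball of fixed radius, where the local compactness of the Sobolev embedding can be exploited. Write $A_n:=\{x\in\R^d\,:\,|f_n(x)|\ge\eta\}$ and $\delta:=\liminf_{n\to\infty}|A_n|>0$, which is exactly the content of \eqref{eq:fn-dd}. The only consequence of this hypothesis I will use is an $L^p$ lower bound: since $|f_n|\ge\eta$ on $A_n$, for every $p>2$ one has $\int_{\R^d}|f_n|^p\ge\eta^p|A_n|\ge\tfrac12\eta^p\delta$ for all large $n$, so $\{f_n\}$ stays uniformly bounded away from $0$ in $L^p(\R^d)$. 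Throughout I fix the $L^2$-critical exponent $p=2+4/d$, which lies strictly inside $(2,2^*)$ in every dimension, where $2^*=2d/(d-2)$ for $d\ge3$ and $2^*=+\infty$ for $d\le2$.

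The goal becomes to produce a fixed radius $R>0$, a constant $c>0$, and centers $x_n\in\R^d$ with $\int_{B(x_n,R)}|f_n|^2\ge c$ for all large $n$. Granting this, set $g_n:=f_n(\cdot+x_n)$; since translation is an $H^1$-isometry, $\{g_n\}$ is bounded in $H^1(\R^d)$, so up to a subsequence $g_n\wto f$ weakly in $H^1(\R^d)$. Restricting to the ball and invoking the Rellich--Kondrachov theorem (the embedding $H^1(B(0,R))\hookrightarrow L^2(B(0,R))$ is compact), I obtain $g_n\to f$ strongly in $L^2(B(0,R))$ along a further subsequence, whence $\int_{B(0,R)}|f|^2=\lim_{n}\int_{B(0,R)}|g_n|^2\ge c>0$. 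In particular $f\not\equiv0$ and $f\in H^1(\R^d)$, which is precisely the conclusion.

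It remains to exhibit such a ball, i.e. to rule out vanishing. Partition $\R^d$ into unit cubes $\{Q_k\}_{k\in\mathbb{Z}^d}$ and argue by contradiction: suppose $\sup_{k}\int_{Q_k}|f_n|^2\to0$. On each cube I interpolate $\|f_n\|_{L^p(Q_k)}$ between $\|f_n\|_{L^2(Q_k)}$ and the Sobolev exponent, apply the Sobolev inequality on the cube (with a constant uniform in $k$ by translation invariance), and then sum over $k$ using Hölder's inequality together with the finite-overlap property of the covering. For the choice $p=2+4/d$ this yields an estimate of the form
$$
\|f_n\|_{L^p(\R^d)}^{p}\ \le\ C\,\Big(\sup_{k}\int_{Q_k}|f_n|^2\Big)^{(p-2)/2}\,\|f_n\|_{H^1(\R^d)}^{2},
$$
with $C$ independent of $n$. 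Since $\{f_n\}$ is bounded in $H^1(\R^d)$ and the exponent $(p-2)/2$ is strictly positive, the right-hand side would tend to $0$, contradicting the lower bound $\|f_n\|_{L^p}^{p}\ge\tfrac12\eta^p\delta$. Hence $\limsup_{n}\sup_{k}\int_{Q_k}|f_n|^2=:2c>0$; passing to a subsequence, for each $n$ I may select a cube $Q_{k_n}$ with $\int_{Q_{k_n}}|f_n|^2>c$, take $x_n$ to be its center and $R=\sqrt d$ (so that $Q_{k_n}\subset B(x_n,R)$), which delivers the desired ball.

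The main obstacle is exactly this covering-and-interpolation estimate, namely Lions' vanishing alternative: the hypothesis controls only a single super-level set of $f_n$, and the real work is to convert this into control of the $L^2$-mass captured by some fixed ball, preventing the mass of $\{f_n\}$ from dissolving uniformly across all of $\R^d$. This passage rests on the quantitative Sobolev embedding on cubes and the boundedness in $H^1$. Once vanishing is excluded, the remaining steps---translating by $x_n$, extracting a weak $H^1$-limit, and upgrading to strong $L^2$-convergence on the fixed ball by local Rellich compactness---are routine and, notably, independent of the particular value of $\eta$.
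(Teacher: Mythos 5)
Your proof is correct, but it is worth noting that the paper does not prove this lemma at all: it is quoted verbatim as \cite[Lemma 6]{Lieb-83}, whose original argument works directly with the super-level set $\{|f_n|\ge\eta\}$, covering $\R^d$ by congruent cubes and using a Poincar\'e--Sobolev-type inequality on each cube to show that the measure of the level set cannot be spread thinly over all cubes, so some translate captures a fixed amount of it. You instead immediately convert the level-set hypothesis into a uniform $L^p$ lower bound with $p=2+4/d$ and then run the standard Lions vanishing alternative (the cube decomposition with $\theta p=2$, which is exactly what forces the exponent $2+4/d$), finishing with translation, weak $H^1$ compactness, and Rellich on a fixed ball. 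This is a clean and complete argument; the computation confirming that $p=2+4/d$ makes the $H^1(Q_k)$ exponent equal to $2$ so the cube sum telescopes to $\|f_n\|_{H^1(\R^d)}^2$ is right, and the final localization $Q_{k_n}\subset B(x_n,\sqrt d)$ is fine. One observation: in the paper this lemma is used in tandem with Lemma \ref{lem:tool1}, which converts an $L^{p_2}$ lower bound into the level-set bound; your proof converts the level-set bound back into an $L^p$ lower bound, so in the paper's application the composite of the two lemmas is precisely Lions' lemma, and your argument in effect proves that composite directly. Lieb's route is slightly stronger in that it uses only the level-set information (no $H^1\hookrightarrow L^p$ embedding beyond the cube-wise one), but for the purpose at hand the two are interchangeable. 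Two cosmetic points: the summation over the partition into unit cubes needs no ``finite overlap'' or H\"older step, just $\sum_k a_k^{(p-2)/2}b_k\le(\sup_k a_k)^{(p-2)/2}\sum_k b_k$; and you should say explicitly that the subsequence along which $\sup_k\int_{Q_k}|f_n|^2$ stays above $c$ is the one along which all subsequent extractions are performed.
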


Let us come back to our problem. From \eqref{eq:Dun-1}, $w_n$ is bounded in $H^2(\R^d)$. Therefore, by Sobolev's embedding, $w_n$ is bounded in $L^2(\R^d)\cap L^{p}(\R^d)$ for some $p>q>2$. Therefore,  \eqref{eq:Dun-2} allows us to use \label{lem:tool1} to find a constant $\eta>0$ such that  
$$
 \liminf_{n\to \infty} \Big| \{x\in \R^d\,|\,|w_n(x)|\ge \eta\} \Big| >0.
$$
Next, applying Lemma \eqref{lem:tool2}, up to a subsequence of $\{w_n\}$, there exist a sequence $\{x_n\}\subset \R^d$ and $0\not\equiv w \in H^1(\R^d)$ such that $w_n(.+x_n)\to w$ weakly in $H^1(\R^d)$. Since $w_n(.+x_n)$ is also bounded in $H^2(\R^d)$, up to a subsequence, we can assume that 
$$w_n(.+x_n) \to w\quad \text{ weakly in $H^2(\R^d)$ and pointwise.} $$

Now we prove that $\|w\|_{L^2}=1$ and it is a minimizer  for the Gagliardo--Nirenberg inequality \eqref{eq:inte-ineq}. We will proceed similarly to the proof of Theorem \ref{thm:1}. To be precise, since $w_n(.+x_n) \to w$ weakly in $H^2(\R^d)$, we have 
$$
\int_{\R^d} \Big( \Delta w_n (x+x_n) - \Delta w(x) \Big) {\Delta w(x)} \d x \to 0,
$$
and hence
$$
\int_{\R^d} |\Delta w_n|^2 = \int_{\R^d} |\Delta w_n(.+x_n)|^2 =  \int_{\R^d} |\Delta w|^2 +  \int_{\R^d} |\Delta (w_n(.+x_n)-w)|^2 + o(1)_{n\to \infty}.
$$
Moreover, since $w_n(.+x_n) \to w$ pointwise, by Brezis-Lieb's refinement of Fatou's lemma \cite{BreLie-83}, 
$$
\int_{\R^d} |w_n|^q = \int_{\R^d} |w_n(.+x_n)|^q = \int_{\R^d} |w|^q +  \int_{\R^d} |w_n(.+x_n)-w|^q + o(1)_{n\to \infty}.
$$
Therefore,
\begin{align} \label{eq:final-1}
\int_{\R^d} |\Delta w_n|^2  - a^* \int_{\R^d} |w_n|^q &= \Big( \int_{\R^d} |\Delta w|^2 - a^* \int_{\R^d} |w|^q \Big) \nn\\
& + \Big( \int_{\R^d} |\Delta (w_n(.+x_n)-w)|^2 - a^* \int_{\R^d} |w_n(.+x_n)-w|^q \Big) \nn\\
&+  o(1)_{n\to \infty}.
\end{align}
For the left side of \eqref{eq:final-1}, from \eqref{eq:Dun-1} and \eqref{eq:Dun-2} we have
$$
\int_{\R^d} |\Delta w_n|^2  - a^* \int_{\R^d} |w_n|^q \to 0.
$$
For the right side of \eqref{eq:final-1}, note that 
$$
1=\|w_n\|_{L^2}^2= \|w_n(.+x_n)\|_{L^2}^2 =\|w\|_{L^2}^2 + \|w_n(.+x_n)-w\|_{L^2}^2,
$$
since $w_n(.+x_n)\to w$ weakly in $L^2(\R^d)$. Therefore, $\|w\|_{L^2}\le 1$ and $ \|w_n(.+x_n)-w\|_{L^2}^2\le 1$ for $n$ sufficiently large (here we have known that $w\not \equiv 0$). Therefore, by the Gagliardo--Nirenberg inequality \eqref{eq:inte-ineq} to get
$$
\int_{\R^d} |\Delta w|^2 - a^* \int_{\R^d} |w|^q \ge a^* \Big(\|w\|_{L^2}^{2-q}  -1\Big) \int_{\R^d} |w|^q 
$$
and
$$
\liminf_{n\to \infty} \Big( \int_{\R^d} |\Delta (w_n(.+x_n)-w)|^2 - a^* \int_{\R^d} |w_n(.+x_n)-w|^q \Big) \ge 0.
$$
Thus \eqref{eq:final-1} implies that
$$
a^* \Big(\|w\|_{L^2}^{2-q}  -1\Big) \int_{\R^d} |w|^q \le 0.
$$
Since $w\not\equiv 0$, $\|w\|_{L^2}\le 1$ and $q>2$, we conclude that $\|w\|_{L^2}=1$. Thus $w_n(.+x_n)\to w$ strongly in $L^p(\R^d)$ for all $p\in [2,4^*)$. In particular, $w_n(.+x_n)\to w$ strongly in $L^q(\R^d)$, and hence \eqref{eq:final-1} implies that
\begin{align*} 
0\ge \Big( \int_{\R^d} |\Delta w|^2 - a^* \int_{\R^d} |w|^q \Big) +  \int_{\R^d} |\Delta (w_n(.+x_n)-w)|^2 +  o(1)_{n\to \infty}.
\end{align*}
Thus 
$$
\int_{\R^d} |\Delta w|^2 - a^* \int_{\R^d} |w|^q =0,
$$
which means that $w$ is a minimizer for the Gagliardo--Nirenberg inequality \eqref{eq:inte-ineq}, and
$$
\int_{\R^d} |\Delta (w_n(.+x_n)-w)|^2 \to 0,
$$ 
which implies that $w_n(.+x_n)\to w$ strongly in $H^2(\R^d)$. In summary,
$$ \eps_n^{d/2} u_n(\eps_n (x+x_n))= w_n(x+x_n)\to w(x)\quad \text{strongly in~} H^2(\R^d).$$
The proof is complete.
\end{proof}

\end{document}